\newtheorem{lemma}{Lemma}
\newtheorem{result}{Result}
\def\round{\textrm{round}}
\def\std{\textrm{std}}
\def\MSE{\textrm{MSE}}
\def\hdr{\textrm{HDR}}
\def\pred{\textrm{pred}}
\def\Q{\textrm{Q}}
\def\iQ{\textrm{iQ}}
\def\sQ{\mathcal{Q}}
\def\H{{\mathbf T}}
\def\v{{\mathbf v}}
\def\u{{\mathbf u}}
\def\p{{\mathbf p}}
\def\I{{\mathbf I}}
\def\J{{\mathbf J}}
\def\r{{\mathbf r}}
\def\m{{\mathbf m}}
\def\Y{{\mathbf Y}}
\def\W{{\mathbf W}}
\def\E{\mathbb{E}}
\def\Z{\mathbb{Z}}
\def\R{\mathbb{R}}
\def\aka{{\textit{aka }}}
\def\eg{{\textit{e.g.}}}
\def\ie{{\textit{i.e.}}}
\definecolor{RawSienna}{cmyk}{0,0.72,1,0.45}
\newcommand*\rfrac[2]{{}^{#1}\!/_{#2{\kern 0.06em}}}
\begin{document}

\title{Impact Analysis of Baseband Quantizer\\on Coding Efficiency for HDR Video}

\author{Chau-Wai~Wong,~\IEEEmembership{Member,~IEEE,}
        Guan-Ming~Su,~\IEEEmembership{Senior Member,~IEEE,}
        and~Min~Wu,~\IEEEmembership{Fellow,~IEEE}
\vspace{-5mm}
\thanks{Copyright (c) 2016 IEEE. Personal use of this material is permitted. However, permission to use this material for any other purposes must be obtained from the IEEE by sending a request to pubspermissions@ieee.org.}%
\thanks{Manuscript received March 8, 2016; revised July 15, 2016; accepted July 16, 2016. Date of publication July xx, 2016; date of current version July xx, 2016. The associate editor coordinating the review of this manuscript and approving it for publication was Prof. Yao Zhao.}%
\thanks{C.-W. Wong and Min Wu are with the Department
of Electrical and Computer Engineering, and the Institute for Advanced Computer Studies, University of Maryland, College Park, MD 20742, USA. This work was initiated when C.-W. Wong was a research intern at Dolby Laboratories in 2014. E-mail: ({\kern 0.06em}cwwong,~minwu{\kern 0.06em})@umd.edu.}
\thanks{G.-M. Su is with Dolby Laboratories, Sunnyvale, CA 94085, USA. E-mail: guanmingsu@ieee.org.}
}

\maketitle

\begin{abstract}
Digitally acquired high dynamic range (HDR) video baseband signal can take 10 to 12 bits per color channel. It is economically important to be able to reuse the legacy 8 or 10-bit video codecs to efficiently compress the HDR video. Linear or nonlinear mapping on the intensity can be applied to the baseband signal to reduce the dynamic range before the signal is sent to the codec, and we refer to this range reduction step as a baseband quantization. We show analytically and verify using test sequences that the use of the baseband quantizer lowers the coding efficiency. Experiments show that as the baseband quantizer is strengthened by 1.6 bits, the drop of PSNR at a high bitrate is up to 1.60\,dB. Our result suggests that in order to achieve high coding efficiency, information reduction of videos in terms of quantization error should be introduced in the video codec instead of on the baseband signal.
\end{abstract}

\begin{IEEEkeywords}
Reshaping, Quantization, High Dynamic Range (HDR), Bitdepth, Transform Coding, HEVC/H.265
\end{IEEEkeywords}

\pdfoutput=1
\vspace{-2mm}
\section{Introduction}

Realizing more vivid digital videos relies on two main aspects: more pixels and better pixels\cite{Brooks-15, betterPixel-article}. The latter is more important than the former nowadays when the resolution goes beyond the high definition.  
At the signal level, achieving better pixels means adopting a wide color gamut (WCG), and using a high dynamic range (HDR) to represent all colors with small quantization errors \cite{edr_patent-15, implicationHDR-15, towardsHDR-15, std_doc1, std_doc2}.

One efficient color coding standard that keeps the visibility of quantization artifacts to a uniformly small level is the perceptual quantizer (PQ)\cite{SMPTE-ST2084, Miller-PQ}, but it still takes 12 bits to represent all luminance levels.
Economically, it is important to be able to reuse the legacy 8 or
10-bit video codecs such as H.264/AVC\cite{H264_standard} and H.265/HEVC (without range extensions)~\cite{HEVC_standard}
in order to efficiently compress HDR videos.
Linear or nonlinear mapping (\eg, reshaping~\cite{GMS_patent-11, GMS_patent-13a, GMS_patent-13b}) on the intensity can be applied
to the baseband signal to reduce the dynamic range before
the signal is sent to the encoder, and we refer to this range
reduction step as a baseband quantization.
Details of the baseband quantizer can be sent as side information to the decoder
to recover the baseband signal.
Even if a codec supports the dynamic range of a video, 
range reduction can also be motivated by the needs of 
i) saving the running time of the codec via computing numbers in a smaller range, 
ii) handling the event of instantaneous bandwidth shortage as a coding feature provided in VC-1\cite{VC1_H264_book, rao-kim-hwang-14, VC1_standard}, or
iii) removing the color precision that cannot be displayed by old screens. 

Hence, it is important to ask whether reducing the bitdepth for baseband signal is bad for coding efficiency measured in HDR. Practitioners would say ``yes'', but if one starts to tackle this question formally, the answer is not immediately clear as the change of the rate-distortion (RD) performance is non-trivial: reducing the bitdepth for baseband signal while maintaining the compression strength of the codec will lead to a smaller size of encoded bitstream and a larger error measured in HDR.

We approach this problem by establishing the relationship between the strength of the baseband quantizer and the coding efficiency measured in (peak) signal-to-noise ratio [(P)SNR]. 
The (P)SNR measure on video signal stored in the PQ format can be approximately considered perceptually uniform because the PQ is by design a perceptually uniform representation in its signal domain\cite{SMPTE-ST2084, Miller-PQ}.
It is beneficial to first model the problem of quantifying the error in the reconstructed images\cite{transformFoundation-01} as the problem of quantifying the error in the reconstructed residues. We then examine the error of a single quantizer, and arrive at Lemma~2 that serves as a primitive to facilitate the joint analysis on the effects of baseband and codec quantizers with a linear transform.

The paper is organized as follows. In Section~\ref{subsec:primitive}, we simplify the practical HDR video coding pipeline into a theoretically tractable model, and then present the main derivation in Section~\ref{subsec:mainDerivation}. Simulation results are presented in Section~\ref{sec:simu} to validate the derivation, and experimental results on videos are presented in Section~\ref{sec:exp} to confirm the theoretical explanation.

\pdfoutput=1
\section{HDR Video Coding Pipeline Modeling} \label{subsec:primitive}

\begin{figure}[!t]
 \centering
 \vspace{-4mm}
 \includegraphics[width=3.5in]{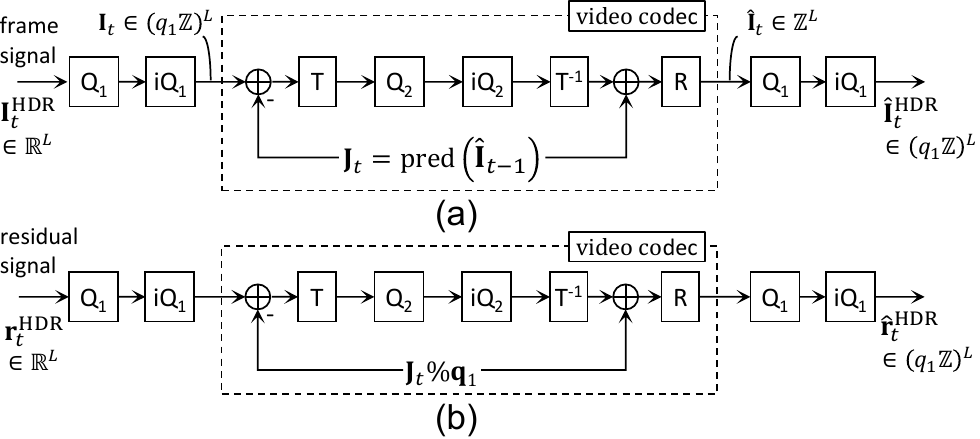}
 \caption{(a) Block diagram for the video coding process
with the effect of baseband signal quantization, and 
(b) equivalent diagram of (a).
Block R is the rounding to the nearest integer operation, $\round(x)$.  
$\Q_i(x) \stackrel{\textrm{def}}{=} \round \left( x / q_i \right),
\textrm{i}\Q_i(x) \stackrel{\textrm{def}}{=}  q_i \cdot x, \ i=1,2$ are quantization and dequantization, respectively.
All operations are applied separately to each entry of $x$ when $x$ is a vector.}
 \label{derive:block_diagram}
\end{figure}		

\subsection{Quantifying Frame Error by Residue Error}
Block diagram shown in Fig.~\ref{derive:block_diagram}~(a) models the video coding pipeline with the effect of baseband signal quantization. The input to the pipeline is the HDR frame at time index~$t$, $\I_t^{\hdr}$, with $L$ pixels. The immediate input to the video codec $\I_t$ and final reconstructed output $\hat{\I}_t^{\hdr}$ are limited by the precision of the finite bits container, so pixels take values on the set ${q_1 \Z} = \{nq_1 | n \in \Z \}$. The immediate output pixels from the codec take integer values due to the rounding operation at the final stage of the codec, and the integer-valued vector $\hat{\I}_{t-1}$ is used by intra- and inter-predictors collectively modeled as $\pred(\cdot)$.

\begin{lemma}[frame error by residue error]
The problem of quantifying the error of predictively coded video frames can be reduced approximately to quantifying the error of non-predictively coded residues.
\end{lemma}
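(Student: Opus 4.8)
The plan is to unfold the closed loop of the predictive codec in Fig.~\ref{derive:block_diagram}(a) and show that, up to a per-pixel rounding term contributed by block~R that is at most $\tfrac12$ in magnitude, the reconstruction error of a coded frame coincides with the reconstruction error of the residue that is actually handed to the transform/quantization stage; since no prediction is applied to that residue, quantifying its error is a non-predictive coding problem.

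First, I would make the recursions explicit. At time~$t$ let $\p_t \stackrel{\textrm{def}}{=} \pred(\hat{\I}_{t-1})$ be the prediction formed from the previously reconstructed frame, let $\r_t \stackrel{\textrm{def}}{=} \I_t - \p_t$ be the residue fed to the transform coder, and let $\hat{\r}_t$ be that residue after it has traversed the transform, $\Q_2$, $\iQ_2$, and the inverse transform. The reconstructed frame is then $\hat{\I}_t = \round(\p_t + \hat{\r}_t)$. The crucial structural fact is that the \emph{same} vector $\p_t$ appears in forming $\r_t$ at the encoder and in forming $\hat{\I}_t$ at the decoder, because both sides hold $\hat{\I}_{t-1}$ before frame~$t$ is processed.

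Second, I would subtract. Writing $\mathbf{e}_t \stackrel{\textrm{def}}{=} \round(\p_t + \hat{\r}_t) - (\p_t + \hat{\r}_t)$ for the rounding residual of block~R, which obeys $\|\mathbf{e}_t\|_\infty \le \tfrac12$, and using $\I_t = \p_t + \r_t$, the prediction cancels and
\begin{equation}
\hat{\I}_t - \I_t = (\hat{\r}_t - \r_t) + \mathbf{e}_t .
\end{equation}
Hence the per-frame distortion $\tfrac1L\|\hat{\I}_t - \I_t\|^2$ differs from the residue distortion $\tfrac1L\|\hat{\r}_t - \r_t\|^2$ only through the cross term and through $\tfrac1L\|\mathbf{e}_t\|^2 \le \tfrac14$, both negligible against the error induced by $\Q_2$ whenever the codec quantizer is non-trivial; this is the precise sense in which the reduction is ``approximate.'' As $\r_t$ is then transform-coded and quantized directly, with no further prediction, quantifying $\|\hat{\r}_t - \r_t\|^2$ is exactly a non-predictive coding problem, and the subsequent derivation lifts this conclusion to the HDR-domain distortion through $\iQ_1$.

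The step I expect to be the main obstacle is the bookkeeping around the feedback path: $\r_t$ depends on $\hat{\I}_{t-1}$, hence on the accumulated errors of all earlier frames, so $\r_t$ is not the ``clean'' detail signal of frame~$t$ and its error need not be independent across~$t$. The identity above sidesteps this because it holds for whatever $\r_t$ the loop produces, but to argue that the ``approximately'' stays uniformly small one must also use that the accumulated reconstruction error remains bounded in practice --- enforced by periodic intra refresh and by the boundedness of the quantizers $\Q_1$ and $\Q_2$ --- so that $\|\mathbf{e}_t\|$ never grows relative to the quantization error under analysis.
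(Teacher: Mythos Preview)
Your cancellation identity at the codec's integer scale is correct, but the route is different from the paper's and has a gap at the step you defer with ``lift through $\iQ_1$.'' The block $\iQ_1$ is just multiplication by $q_1$, so lifting $\hat{\I}_t-\I_t$ back to the HDR scale gives $q_1(\hat{\r}_t-\r_t)+q_1\mathbf{e}_t$, which is \emph{not} $\hat{\I}_t^{\hdr}-\I_t^{\hdr}$: you are missing the input baseband-quantization error $\sQ_1(\I_t^{\hdr})-\I_t^{\hdr}$, precisely the quantity the whole paper is about. Consequently the non-predictive pipeline you arrive at is $\H\to\sQ_2\to\H^{-1}$ with no $\sQ_1$ anywhere, whereas the reduced model that Section~\ref{subsec:mainDerivation} actually analyzes (the main branch of Fig.~\ref{derive:block_diagram}(b)) is $\sQ_1\to\H\to\sQ_2\to\H^{-1}\to\sQ_1$.

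The paper avoids this by staying in the HDR scale throughout. It decomposes the prediction as $\J_t=\J_t\%q_1+\sQ_1(\J_t)$ and uses the lattice shift invariance $\sQ_1(x+kq_1)=\sQ_1(x)+kq_1$ for $k\in\Z$ to slide $\sQ_1(\J_t)$ through \emph{both} the input-side and output-side $\sQ_1$ in Eqn.~(\ref{eq:pipeline_orig}). This yields the exact identity $\hat{\I}_t^{\hdr}-\I_t^{\hdr}=\hat{\r}_t^{\hdr}-\r_t^{\hdr}$ with the HDR-scale residue $\r_t^{\hdr}=\I_t^{\hdr}-\sQ_1(\J_t)$, and the equivalent diagram still carries the two baseband quantizers plus a small feed-forward branch $\J_t\%q_1$ of magnitude at most $q_1/2$. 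The ``approximate'' in the lemma is then the removal of that $\J_t\%q_1$ branch, not the integer rounding $\mathbf{e}_t$ you isolate. In short: your argument proves a codec-domain version of the lemma but does not deliver the HDR-domain non-predictive model with $\sQ_1$ at both ends that the rest of the paper needs; patching your lift would force you to reintroduce $\sQ_1$ on the residue and essentially redo the paper's decomposition.
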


\begin{proof}
For simplicity, define the quantizer function $\sQ_i(x) = \iQ_i \left( \Q_i(x) \right)$.
Denote the predicted frame $\pred(\hat{\I}_{t-1})$ as $\J_t$, and it can be decomposed into the residue vector with the smallest absolute value for each coordinate, and a vector of integer multiples of $q_1$, namely, 
\begin{equation} \label{eq:mc_frame_decomp}
  \J_t = \J_t\%q_1 + \sQ_1(\J_t)	
\end{equation}
where $\%$ is the modulo operation. Following Fig.~\ref{derive:block_diagram}~(a), the error due to the joint effect of baseband quantization and video compression $\hat{\I}_t^\hdr - \I_t^\hdr$ can be written as: 
\begin{equation} \label{eq:pipeline_orig}
\sQ_1 \hspace{-1mm} \left( \H^{-1} \sQ_2 \hspace{-1mm} \left\{ \H \left[ \sQ_1 \hspace{-0.5mm} (\I_t^\hdr) - \J_t \right] \right\} + \J_t \right) - \I_t^\hdr.
\end{equation}
Substituting Eqn.~(\ref{eq:mc_frame_decomp}) into (\ref{eq:pipeline_orig}) and moving $\sQ_1(\J_t) \in ({q_1 \Z})^L$ into and out of the quantizer with step size $q_1$, we obtain:
\begin{multline} \label{eq:equivalent_model}
\sQ_1 \Bigl( \H^{-1} \sQ_2 \hspace{-1mm} \left\{ \H \left[ \sQ_1 \hspace{-0.5mm} (\I_t^\hdr - \sQ_1(\J_t)) - \J_t\%q_1 \right] \right\}  \\
 + \J_t\%q_1 \Bigr) - \left[ \I_t^\hdr - \sQ_1(\J_t) \right].
\end{multline}
Here, $\I_t^\hdr - \sQ_1(\J_t)$ can be considered as an intra- or inter-prediction residue, and we define it as $\r_t^\hdr$. In terms of quantifying error for reconstructed HDR frames, Fig.~\ref{derive:block_diagram}~(a) is therefore equivalent to Fig.~\ref{derive:block_diagram}~(b) visualized from Eqn.~(\ref{eq:equivalent_model}), namely, 
\begin{equation}
\hat{\I}_t^\hdr - \I_t^\hdr = \hat{\r}_t^\hdr - \r_t^\hdr.
\end{equation}
Assuming the quantization step of $\Q_1$ is much smaller when compared to the range of $\r_t^\hdr$, the predictive branch $\J_t\%q_1$ can be removed to obtain a slightly perturbed residue $\tilde{\r}_t^\hdr$. Therefore, the error of non-predictively coded residues $\tilde{\r}_t^\hdr - \r_t^\hdr \approx \hat{\I}_t^\hdr - \I_t^\hdr$. That is, the non-predictive coding branch of Fig.~\ref{derive:block_diagram}~(b) is approximately equivalent to the original pipeline of Fig.~\ref{derive:block_diagram}~(a).
\end{proof}

\begin{figure}[!t]
 \centering
 \vspace{-10mm}
 \subfloat[]{\includegraphics[width=1.5in]{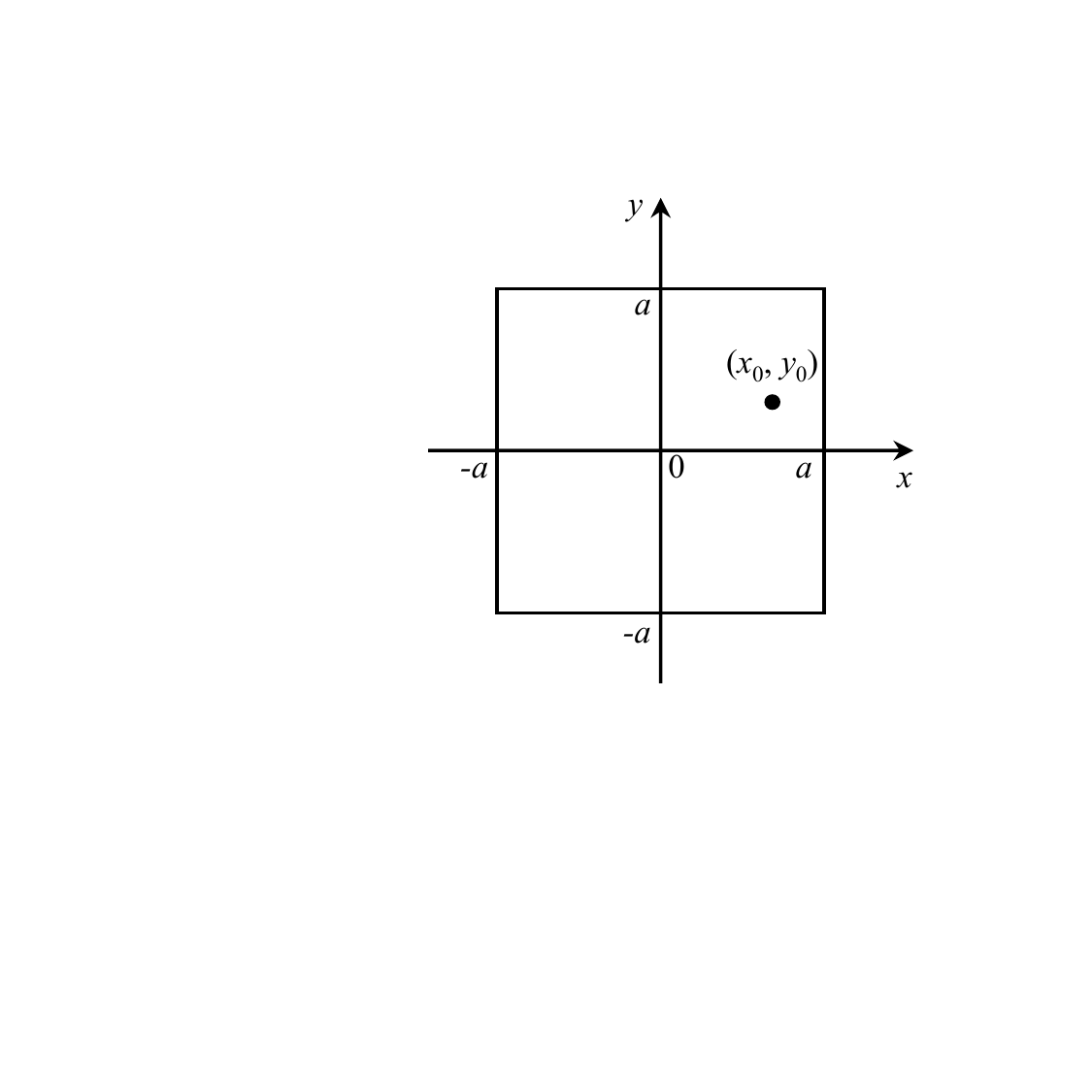}}
 \subfloat[]{\includegraphics[width=1.7in]{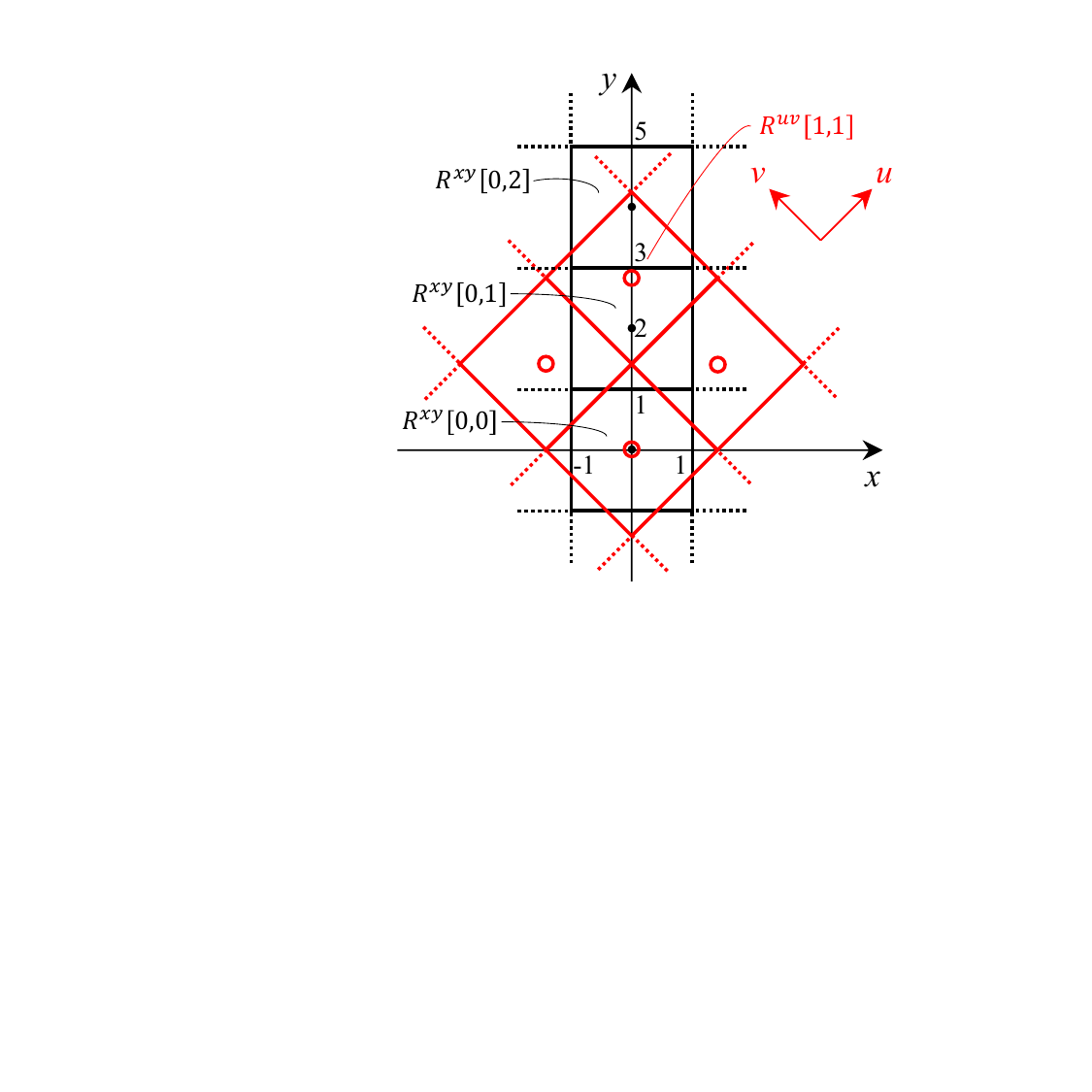}}
 \caption{Illustration for MSE calculation for the cases that (a) any point $(x,y)$ located within the $2a$-by-$2a$ square is quantized to the reconstruction centriod $(x_0,y_0)$ that may or may not located within the square, and (b) a quantization in $xy$-plane is followed by a transform, a quantization in $uv$-plane, inverse transform, and a quantization in $xy$-plane.}
 \label{derive:mse_square}
\end{figure}		

\subsection{Quantization Error for a Hypercube} 
Assume that the reconstruction centroid for a squared region of edge length $2a$ centered at $(0,0)$\footnote{Throughout this paper, column vector $[x_1\ x_2\ \cdots\ x_n]^T$ may be denoted as $(x_1, x_2, \cdots, x_n)$ for the purpose of compact presentation.} as shown in Fig.~\ref{derive:mse_square}~(a) is located at $(x_0,y_0) \in \mathbb{R}^2$, not limited to be within the region. We further assume that the point $(X,Y)$ is uniformly distributed over the square, namely, the joint distribution $f_{X,Y}(x,y) = \frac{1}{4a^2},\ (x,y) \in [-a,a]^2$. The mean-squared error (MSE) for the random vector $(X,Y)$ quantized to/reconstructed at $(x_0,y_0)$ is
\begin{equation}
\begin{split}
  \MSE &= \E \left[ \|(X,Y)-(x_0,y_0)\|^2 \right] \\
	    &= \int_{-a}^{a} \int_{-a}^{a} \|(x,y)-(x_0,y_0)\|^2 \, f_{X,Y}(x,y) \, dx \, dy \\
			&= \frac{1}{4a^2} \int_{-a}^{a} dy \int_{-a}^{a} (x-x_0)^2 + (y-y_0)^2 \, dx \\
			&= d^2 + \frac{2}{3}a^2
\end{split}
\end{equation}
where $d^2 = x_0^2 + y_0^2$ is the squared Euclidean distance to the geometric center of the region, $(0,0)$, and $\frac{2}{3}a^2$ is related to the strength of the quantizer. It is straight forward to extend the result to the $N$-dimensional ($N$-d) case shown as follows:

\begin{lemma}[quantization error]
The mean-squared error for a point that is uniformly distributed within an $N$-d hypercube with an arbitrarily positioned reconstruction centroid and edge length $2a$ is $d^2 + \frac{N}{3}a^2$, where $d$ is the Euclidean distance from the centroid to the geometric center of the hypercube.
\end{lemma}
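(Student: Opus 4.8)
The plan is to exploit the product structure of the uniform distribution on the hypercube $[-a,a]^N$ together with the additive structure of the squared Euclidean norm. Write the reconstruction centroid as $(x_{0,1},\dots,x_{0,N})$ and the random point as $(X_1,\dots,X_N)$ with the $X_i$ i.i.d.\ uniform on $[-a,a]$, so that the joint density factorizes. Since
\begin{equation*}
\|(X_1,\dots,X_N)-(x_{0,1},\dots,x_{0,N})\|^2 = \sum_{i=1}^N (X_i - x_{0,i})^2,
\end{equation*}
linearity of expectation gives $\MSE = \sum_{i=1}^N \E\!\left[(X_i-x_{0,i})^2\right]$, reducing the $N$-d problem to $N$ independent one-dimensional problems.

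For each coordinate I would apply the bias–variance decomposition $\E[(X_i-x_{0,i})^2] = \mathrm{Var}(X_i) + (\E[X_i]-x_{0,i})^2$. Because $X_i$ is uniform on $[-a,a]$, its mean is $0$ and its variance is $(2a)^2/12 = a^2/3$, hence $\E[(X_i-x_{0,i})^2] = x_{0,i}^2 + a^2/3$; equivalently one may simply iterate the elementary integral $\frac{1}{2a}\int_{-a}^a (x-x_{0,i})^2\,dx = x_{0,i}^2 + a^2/3$, which is exactly the $N=1$ specialization of the computation already displayed for $N=2$. Summing over $i=1,\dots,N$ yields $\MSE = \sum_{i=1}^N x_{0,i}^2 + \tfrac{N}{3}a^2 = d^2 + \tfrac{N}{3}a^2$, where $d^2 = \sum_i x_{0,i}^2$ is by definition the squared distance from the centroid to the geometric center $(0,\dots,0)$. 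Nothing in the argument constrains the centroid to lie inside the cube, matching the statement.

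There is no substantive obstacle; the only point requiring care is that the cross terms $-2(X_i-\E X_i)(\E X_i - x_{0,i})$ vanish in expectation, which is precisely where the symmetry of the cube about its center (i.e.\ $\E X_i = 0$) is used — without re-centering, a linear term in $x_{0,i}$ would survive. I would also briefly note that the same separable argument shows the per-coordinate contribution is governed only by the common coordinate variance, so the clean closed form $d^2 + \tfrac{N}{3}a^2$ is specific to the uniform hypercube assumed here, which is the form used in the subsequent joint analysis.
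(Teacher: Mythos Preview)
Your proposal is correct and follows essentially the same approach as the paper: the paper carries out the separable computation explicitly in the $2$-d case via the double integral and then asserts that the extension to $N$ dimensions is straightforward, which is exactly the coordinate-wise reduction you spell out. Your use of the bias--variance identity is just a repackaging of the same elementary integral the paper evaluates, so there is no substantive difference.
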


This result agrees with two intuitive observations. First, as the reconstruction centroid departs from the geometric center, the quantization error increases. Second, as the quantizer strength quantified by the edge length $2a$ increases, the error increases.\\

\section{Effect of Baseband Quantizer\\on Coding Efficiency}

\subsection{Error of Video Coding With Baseband Quantizer} \label{subsec:mainDerivation}
Lemma 1 allows us to avoid dealing with the predictive coding loop in the following analysis, and only to follow a scheme with transform coding and quantization blocks in series. In addition, the residue signal is used as the input as the residue can be more easily modeled in a probabilistic sense than the frame signal. Lemma 2 converts the derivation of the reconstruction error of all possible points to that of just a few reconstruction centriods.

We again use an example with two axes as shown in Fig.~\ref{derive:mse_square}~(b) to illustrate the idea behind, and all the derivations can be easily expanded to the $N$-d general case. 

Assume the input residual signal is a data point $(x,y)$ on the $xy$-plane with a joint probability distribution $f_{XY}(x,y)$. A transform by an orthogonal matrix $\H$ can be considered geometrically as a rotation of the coordinate system, namely,
\begin{align}
(x,y) \stackrel{\H}{\mapsto} (u,v),\ \ \ [u\ v]^T = \H \, [x\ y]^T
\end{align}
\normalsize
where we choose $\H = \frac{1}{\sqrt{2}} \left( \begin{smallmatrix} 1 & 1 \\ -1 & 1 \end{smallmatrix} \right)$. In this example, $(1,0) \stackrel{\H}{\mapsto} (\frac{1}{\sqrt{2}},-\frac{1}{\sqrt{2}})$ and $(1,1) \stackrel{\H}{\mapsto} (\sqrt{2},0)$.

Scalar quantization is equivalent to cutting the plane into squares, as shown in Fig.~\ref{derive:mse_square}~(b). We denote the point set containing all points belonging to a quantized region in $xy$-plane with horizontal index $i$ and vertical index $j$ as $\textsl{R}^{xy}[i,j]$, where indices $i,j \in \mathbb{Z}_M \stackrel{\textrm{def}}{=}\: \{-M, \cdots, 0, \cdots, M\}$. Geometric centers in of $\textsl{R}^{xy}[i,j]$ are denoted as ``$\bullet$'', and those of $\textsl{R}^{uv}[i,j]$ are denoted as ``\textcolor[rgb]{1,0,0}{$\circ$}''. In this example, $\textsl{R}^{xy}[0,1]$ centered at $(0,2)$ and $\textsl{R}^{xy}[0,2]$ centered at $(0,4)$ are both quantized to $\textsl{R}^{uv}[1,1]$ by $\sQ_2$, and finally quantized to $\textsl{R}^{xy}[0,1]$ by $\sQ_1$.

The overall error $D \stackrel{\textrm{def}}{=}\, \E [ \| (x,y)-(\hat{x},\hat{y}) \|^2  ]$ due to video coding with baseband quantizer can be calculated by averaging MSE over the $(2M+1)^2$ regions indexed by $(i,j)$, namely, 
\begin{equation} \label{eq:overall_error}
D = 
\E \Bigl[ \E \bigl[ \| (x,y)-(\hat{x},\hat{y}) \|^2  | \textsl{R}^{xy}[I,J] \bigr] \Bigr]
\end{equation}
where the probability mass function is $p_{IJ}(i,j) = \int_{x,y \in \textsl{R}^{xy}[i,j]} f_{XY}(x,y)\,dxdy$. For each region $\textsl{R}^{xy}[i,j]$, the calculation of error is simplified\footnote{Note that the uniform distribution assumption of Lemma~2 is valid within region $\textsl{R}^{xy}[i,j]$ for the high bitrate coding scenario that we are interested in.} by Lemma~2, namely,
\begin{equation} \label{eq:region_error}
\small
\E \bigl[ \| (x,y)-(\hat{x},\hat{y}) \|^2  | \textsl{R}^{xy}[i,j] \bigr]
= \frac{2}{3} \left( \frac{q_1}{2} \right)^2 + 
 d^2\{\textsl{R}^{xy}[i,j]\}
\end{equation}
where $d\{\textsl{R}^{xy}[i,j]\}$ is the Euclidean distance from the reconstruction centroid to the geometric center of $\textsl{R}^{xy}[i,j]$. The geometric center is by definition $\m = (iq_1, jq_1)$. Passing $\m$ through the whole pipeline shown in Fig.~\ref{derive:block_diagram}~(b) excluding the predictive branch (\aka the main branch), one can obtain the reconstruction centroid:
\begin{subequations}
\begin{align}
\hat{\m} &= \sQ_1 \Big( \H^{-1} \left\{ \sQ_2 \left[ \H \, \sQ_1 \left( [iq_1\ jq_1]^T \right) \right] \right\} \Big) \label{eq:recon_centroid_line1} \\
&= q_1 \; \round \left[
\frac{q_2}{q_1} \, \H^{-1} \, \round \left( \frac{q_1}{q_2} \, \H \begin{bmatrix} i\\ j \end{bmatrix} \right) 
\right]. \label{eq:recon_centroid_line2}
\end{align}
\end{subequations}
Substituting Eqn.~(\ref{eq:region_error}) into 
Eqn.~(\ref{eq:overall_error}), we obtain:
\begin{equation}
D = \frac{2}{3} \left( \frac{q_1}{2} \right)^2 + \E \bigl[ 
 d^2\{\textsl{R}^{xy}[I,J]\}  \bigr].
\end{equation}
Due to the space limitation, we leave the detailed derivation for $\E \left[ d^2\{\textsl{R}^{xy}[I,J]\} \right]$ to the Appendix in the supplementary material. We present the final result of the derivation for the overall error $D$ for scenarios that the baseband quantizer is finer than the codec quantizer (\ie, $q_1 < q_2$) as follows:
\begin{numcases}{D =} \label{eq:recon_center_pos3}
\frac{N}{12} \left( q_2^2 + 2 q_1^2 \right), & \hspace{-6mm} $q_1 \le \frac{q_2}{2}$;\\
\frac{N}{12} \left[ q_2^2 + (1+\gamma_1) q_1^2 + 2 \gamma_{12} q_2 q_1 \right], & \hspace{-6mm} $q_1 > \frac{q_2}{2}$,
\end{numcases}
where $N$ is the length of input signal vectors, and estimates of $\gamma_1$ and $\gamma_{12}$ are displayed in Fig.~\ref{fig:exp_gamma_vs_alpha}~(a).

\begin{figure}[!t]
 \centering
 \vspace{-5mm}
 \subfloat[]{\includegraphics[width=1.8in]{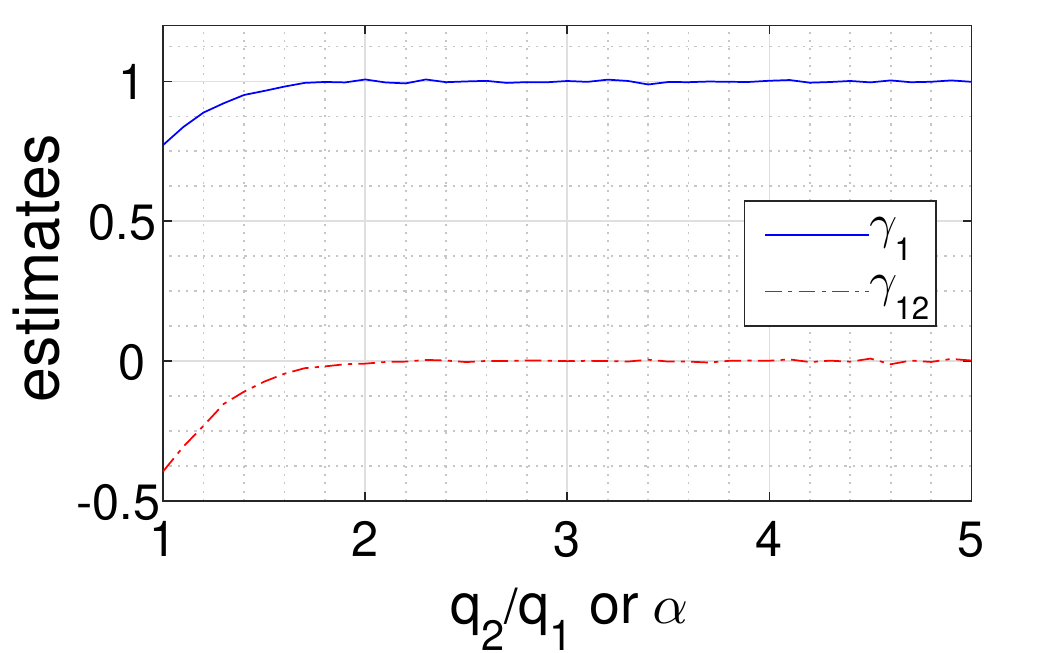}}
 \subfloat[]{\includegraphics[width=1.8in]{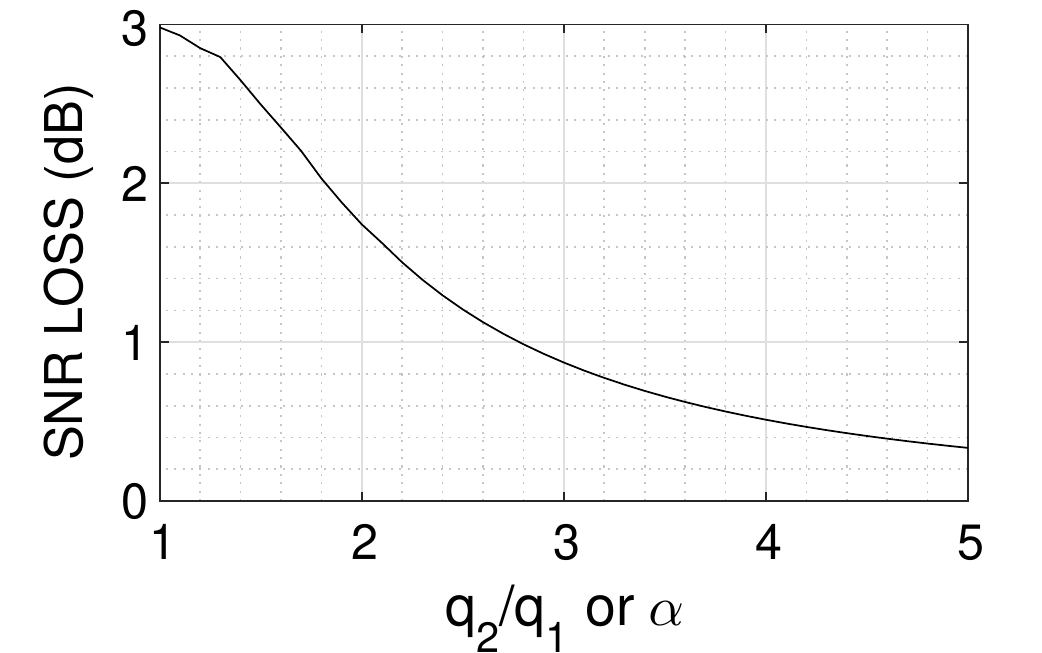}}
 \caption{(a) Estimated $\gamma_1$ and $\gamma_{12}$, and (b) SNR LOSS as a function of $q_2/q_1$ or $\alpha$. Note that for the above curves, we have extended the definitions of $\gamma_1$ and $\gamma_2$ to the case $q_2/q_1 > 2$ in which the true values of $\gamma_1$ and $\gamma_{12}$ are $1$ and $0$, respectively.}
 \label{fig:exp_gamma_vs_alpha}
 \vspace{-2mm}
\end{figure}

\normalsize
It can be proved that, using the scheme of the main branch of Fig.~\ref{derive:block_diagram}~(b), the bitrate is solely controlled by the codec quantizer $\sQ_2$. Hence, fixing $q_2$ and thus the bitrate, any increase in $q_1$ leads to a decrease in SNR and therefore in coding efficiency. In comparison, a change in $q_2$, which changes bitrate and SNR simultaneously, has no impact on the coding efficiency.\footnote{Recall that the comparison of coding efficiency between two codecs is via the comparison of their empirical RD curves. A change in $q_2$ does lead to a move of the operation point in the bitrate-SNR plane, but both the starting and the ending locations reside on the same RD curve.}

Given the scenarios of interest that $q_1 < q_2$, we define $q_1 = \tfrac{q_2}{\alpha}, \ \alpha \ge 1$. The SNR loss with reference to an almost perfectly fine baseband quantizer, \ie, $q_1 \to 0$, can be easily derived:
\small
\begin{numcases}{\hspace{-6mm}\textrm{SNR LOSS} =} \label{eq:snr_loss}
10 \log_{10} \left( 1 + \frac{2}{\alpha^2} \right), &\hspace{-6mm} $\alpha \ge 2$; \\
\label{eq:snr_loss2}
10 \log_{10} \left( 1 + \frac{1+\gamma_1}{\alpha^2} + \frac{2\gamma_{12}}{\alpha} \right), &\hspace{-6mm}  $\alpha < 2$.
\end{numcases}
\normalsize
The resulting SNR loss is shown in Fig.~\ref{fig:exp_gamma_vs_alpha}~(b).

To conclude, under the assumption of $q_1 < q_2$,\ \ i) the best case is $q_1 \ll q_2$ or $\alpha \to \infty$, and error is solely due to the codec quantizer and there is no reduction in SNR; and ii) the worst case is reached when $q_1$ increases to $q_2$ or $\alpha$ decreases to $1$, and a maximum of $3$\,dB SNR drop is incurred.

\begin{figure*}[!t]
 \centering
 \vspace{-13mm}
 \subfloat[]{\includegraphics[width=2.5in]{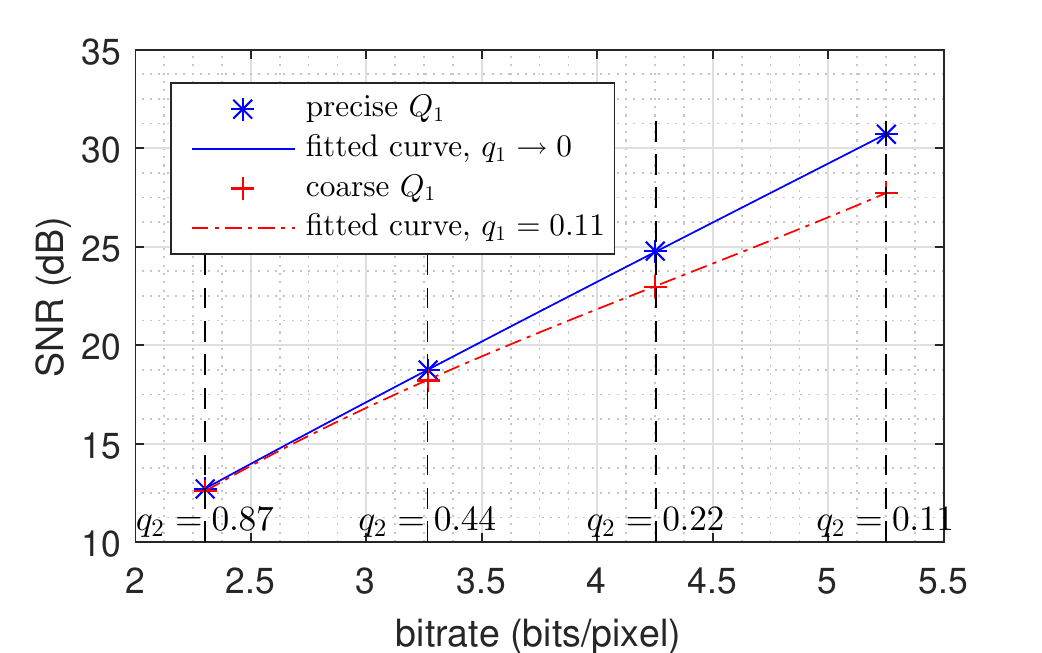}}
 \subfloat[]{\includegraphics[width=2.5in]{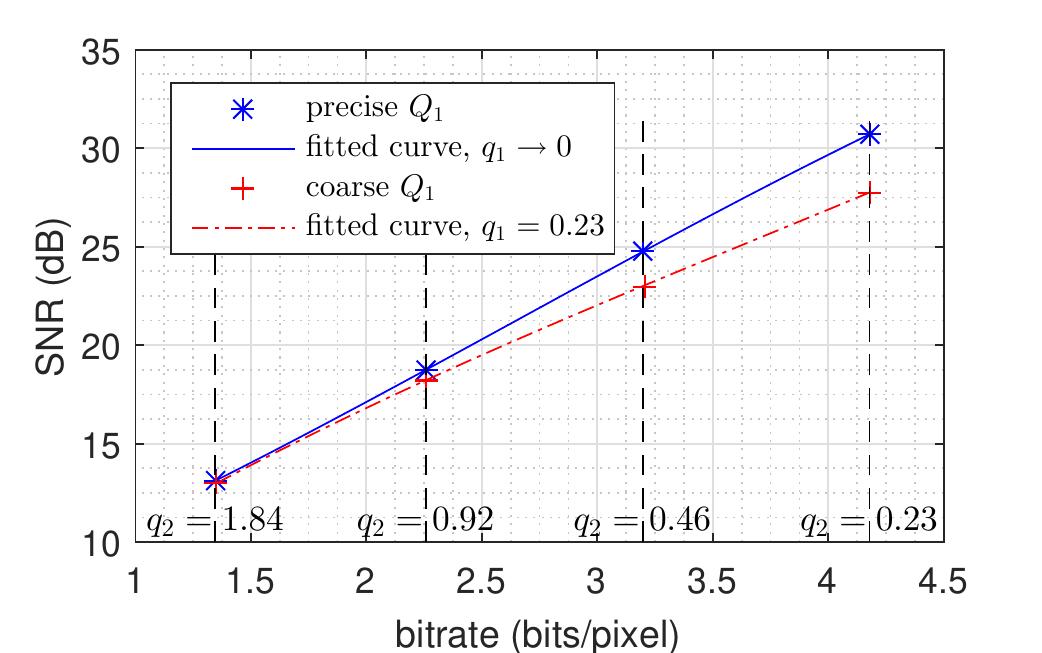}}
 \subfloat[]{\includegraphics[width=2.1in]{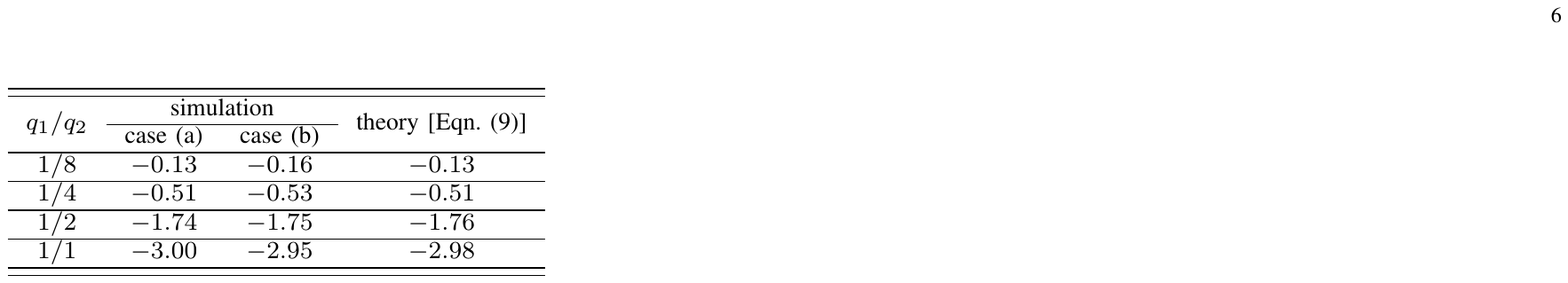}}
 \vspace{-1mm}
 \caption{RD curves of simulated video data revealing the performance gap between the scenarios that baseband quantizer $\sQ_1$ is negligible (\textcolor[rgb]{0,0,1}{\textbf{solid blue}}) and not negligible (\textcolor[rgb]{1,0,0}{\textbf{dash-dot red}}),
case (a): small blocks with low neighborhood correlation ($L = 4^2$, $\rho = 0.4$), and
case (b): large blocks with high correlation ($L = 16^2$, $\rho = 0.9$). (c) Simulated SNR drops for different $\rfrac{q_1}{q_2}$ ratios agree with the theoretical results [Eqns.~(\ref{eq:snr_loss}--\ref{eq:snr_loss2})].}
 \label{fig:simu_results}
\end{figure*}

\begin{figure*}[!t]
 \centering
 \vspace{-7mm}
 \subfloat[]{\includegraphics[width=2.04in]{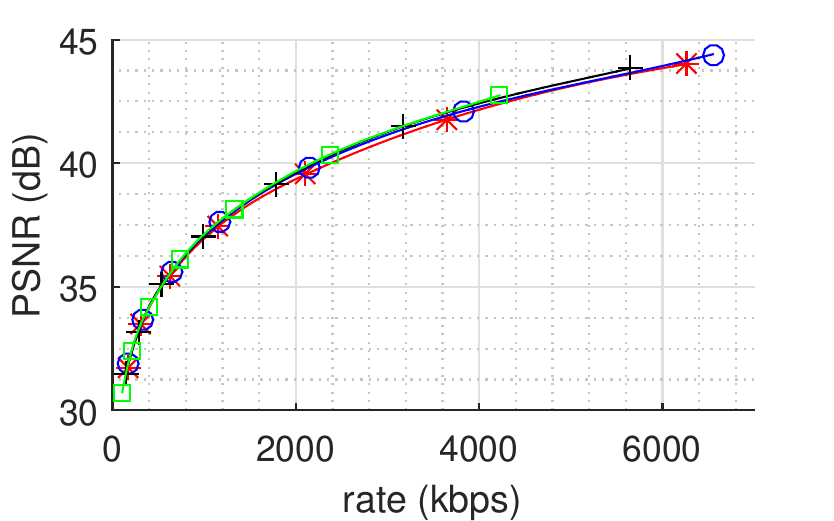}}
 \subfloat[]{\includegraphics[width=2.96in]{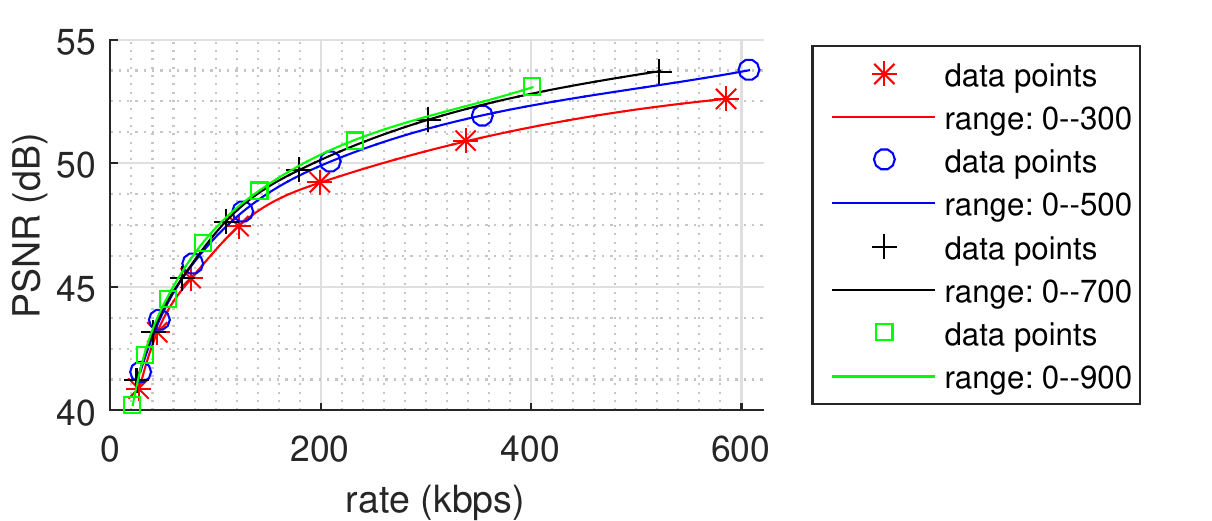}}
 \subfloat[]{\includegraphics[width=2.1in]{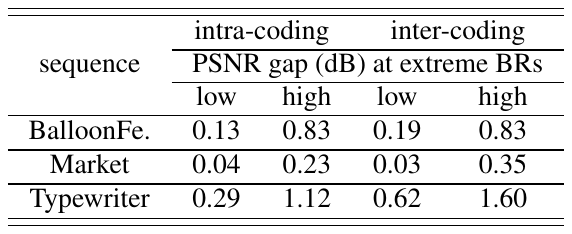}}
 \vspace{-1mm}
 \caption{RD curves for inter-coded videos with different strength of baseband quantizer $\sQ_1$: (a) \textsc{Market} and (b) \textsc{Typewriter}. The PSNR gaps at high bitrate are $0.35$ and $1.60$\,dB, respectively. (c) Largest PSNR gaps at both high and low bitrates for intra- and inter-coded videos.}
 \label{fig:exp_res_fig}
 \vspace{-2mm}
\end{figure*}

\subsection{Simulation Results} \label{sec:simu}

We verify the theoretical result by simulating the change of SNR as a function of $\tfrac{q_1}{q_2}$.
Specifically, assume a length-$L$ Gaussian vector $(X_1, X_2, \cdots, X_L)$, with a fixed correlation of neighboring coordinates, \ie, $\textrm{corr}(X_l, X_{l+1}) = \rho$, for $l = 1, \cdots, L-1$. In image/video coding scenarios, $L$ usually takes value in $\{4^2, 8^2, 16^2\}$.
In our simulation, the realizations of random vectors are obtained by choosing disjoint segments from a realization of an AR(1) process.

We present two simulation cases, namely, (a) for small blocks with low neighborhood correlation ($L = 4^2$, $\rho = 0.4$, $\sigma = \std(X_l) = 1.0911$), and case (b) for large blocks with high correlation ($L = 16^2$, $\rho = 0.9$, $\sigma = 2.2942$). In both cases, we check the performance difference between the scenarios when the baseband quantizer $\sQ_1$ is negligible, \ie, $q_1 \to 0$, and not. When $\sQ_1$ is not negligible, we set the quantizer to be reasonably coarse with respect to the spread of $X_l$, namely, $q_1 = \tfrac{\sigma}{10}$, and the corresponding $q_2$ for each bitrate value on RD curve from left to right are $q_1 \times \{8,4,2,1\}$. 

Simulation results are shown in Fig.~\ref{fig:simu_results}.
The dash-dot red curves show the RD performances when the coarse quantizer $\sQ_1$ is used; the solid blue curves show the performances when $\sQ_1$ is negligible, which is to approximate the scenario that $\sQ_1$ is absent. The RD performance drop with respect to the solid blue curves is
consistent with the theoretical estimates shown in the table of Fig.~\ref{fig:simu_results}~(c). As expected from our theoretical result, the above results are independent of block size $L$ and neighborhood correlation $\rho$.

\pdfoutput=1
\section{Experimental Results on Videos} \label{sec:exp}

We now verify the theoretical results using standard test sequences. Test sequences stored in the 16-bit TIFF container are regarded as the reference/baseband signal. They were first linearly mapped to different dynamic ranges to mimic the effect of the baseband quantizer, the resulting videos were then encoded using HM 14.0~\cite{HM14}, and finally the quality in terms of PSNR was measured in the 16-bit precision.

Detailed simulation conditions are as follows. The luma component of three test sequences \textsc{BalloonFestival}, \textsc{Market}, and \textsc{Typewriter} in BT.2020 color space\cite{rec2020} of size $1920 \times 1080$ are used. The operational bitdepth in the video codec is $10$. Each video is encoded using two structures: the all I-frames structure for $17$ frames (\aka intra-coding), and the IBBB\,$\cdots$ structure for $64$ frames (\aka inter-coding). The codec quantizer takes 6 equally spaced quantization parameters to draw one piece of RD curve. 
Videos are baseband-quantized to the dynamic ranges $[0,300], [0,500], [0,700]$, and $[0,900]$ with effective bitdepth $8.2, 9.0, 9.5$, and $9.8$ bits, respectively. 

The experimental results from all sequences with PSNR measure reveal that the stronger the baseband quantizer is, the more penalty in coding efficiency is incurred. We show the RD performance measured in PSNR for \textsc{Market} and \textsc{Typewriter} that are inter-coded in Figs.~\ref{fig:exp_res_fig}~(a) and (b). The figures reveal that the PSNR gaps become larger as the bitrate increases. More specifically, the PSNR gaps between the green curve and the red curve at a high bitrate (the largest rate that 4 curves simultaneously cover) is $0.35$\,dB for \textsc{Market} and $1.60$\,dB for \textsc{Typewriter}. Table of Fig.~\ref{fig:exp_res_fig}~(c) reports the largest PSNR gaps at both high and low bitrates for intra- and inter-coded videos. 
It is observed that as the baseband quantizer strengthened by $9.8-8.2=1.6$ bits, the drop of PSNR at a high bitrate is up to nearly $1.60$\,dB.

\pdfoutput=1
\section{Conclusion and Discussion}

In this work, we have analyzed the HDR video coding pipeline by explicitly considering the existence of the baseband quantizer. We arrived at the conclusion via both theoretical proof and experiments that the baseband quantizer lowers the coding efficiency, whereas the codec quantizer does not affect the coding efficiency. Hence, information reduction of videos in terms of quantization error should be introduced in the video codec instead of on the baseband signal.

In a more practical scenario, nonlinear mapping is more often used than linear mapping for baseband signal range reduction when the bitdepth is insufficient. Although we have shown that quantizing the baseband signal uniformly leads to a penalty in coding efficiency measured in HDR, it would be interesting to see whether quantizing the baseband signal non-uniformly can also lead to a penalty in coding efficiency.

\newpage
\enlargethispage{-5.7in}
\bibliographystyle{IEEEbib_noURL}
\bibliography{spl_main}

\begin{figure*}
\noindent{}\textbf{Note: This is a supplementary file for paper ``Impact Analysis of Baseband Quantizer on Coding Efficiency for HDR Video'', published in \textit{IEEE Signal Processing Letters} in 2016 by Chau-Wai~Wong, Guan-Ming~Su, and Min~Wu.}
\vspace{5mm}
\end{figure*}

\pagenumbering{gobble}

\newpage

\enlargethispage{-4.5in}
\appendices

\section{Derivation for $\E \left[ d^2\{\textsl{R}^{xy}[I,J]\} \right]$}

\pdfoutput=1
\noindent{}Define a residue function $g(x) = \round(x) - x$, where $g(x) \in (-\frac{1}{2},\frac{1}{2}]$ for any $x>0$, and $[-\frac{1}{2},\frac{1}{2})$ for any $x<0$. Hence, Eqn.~(\ref{eq:recon_centroid_line2}) can be simplified to the sum of three terms:
\begin{multline} \label{eq:recon_center_pos}
\hat{\m} = 
q_1 \begin{bmatrix} i\\ j \end{bmatrix} +
q_2 \ \H^{-1} g \left( \frac{q_1}{q_2} \ \H \begin{bmatrix} i\\ j \end{bmatrix} \right) \\ +
q_1 \ g \left\{
\frac{q_2}{q_1} \ \H^{-1} g \left( \frac{q_1}{q_2} \ \H \begin{bmatrix} i\\ j \end{bmatrix} \right)
\right\}.
\end{multline}
Denote the $n$th row and column of matrix $\H$ by $\v_n^T$ and $\u_n$, respectively. Define $\p = (i,j)$, $Y_n = g \left( \frac{q_1}{q_2} \, \v_n^T \p \right)$, and $W_n = g \left( \frac{q_2}{q_1} \, \u_n^T \Y \right)$. The squared distance is
\begin{subequations} \label{eq:recon_center_pos2}
\begin{align}
d^2\{\textsl{R}^{xy}[I,J]\} &= \left\| \m - \hat{\m} \right\|^2 \\
&= \left\| q_2 \, \H^{-1} \begin{bmatrix} Y_1\\ Y_2 \end{bmatrix} + q_1 \begin{bmatrix} W_1\\ W_2 \end{bmatrix} \right\|^2 \\
&= q_2^2 \left\| \Y \right\|^2 + q_1^2 \left\| \W \right\|^2 + 2q_2q_1 \Y^T \H \W.
\end{align}
\end{subequations}
Since vector $\p \in \mathbb{Z}_M^2$, and the term $\frac{q_1}{q_2} \v_n^T \p$ can take values on a non-degenerated subset of $\mathbb{R}$, except in very rare cases with a certain combination of $q_1, q_2, \v_n$ the term takes value on a subset of $\mathbb{Z}$. For the non-degenerated case, it can be proved that $Y_n$ is approximately uniformly distributed on $(-\frac{1}{2},\frac{1}{2})$. Therefore, $\E [ \left\| \Y \right\|^2 ] = \frac{2}{12}$.

When $q_1 < \frac{q_2}{2}$, the range of every coordinate of $\frac{q_2}{q_1}\Y$ is larger than $(-1,1)$. It can be proved that, $W_n$ is uniformly distributed on $(-\frac{1}{2},\frac{1}{2})$, and $\W$ and $\Y$ are uncorrelated. Therefore, $\E [ \left\| \W \right\|^2 ] = \frac{2}{12}$, and $\E [ \Y^T \H \W ] = \textrm{trace} \{ \H \, \E [ \W \Y^T ] \}$ = 0.

When $q_1 > \frac{q_2}{2}$, as $q_1$ increases, $W_n$ becomes more depend on $\Y$. Statistics $\gamma_1 = \frac{12}{N} \, \E [ \left\| \W \right\|^2 ]$ and $\gamma_{12} = \frac{12}{N} \, \E [ \Y^T \H \W ] = \frac{12}{N} \, \textrm{trace} \{ \H \, \E [ \W \Y^T ] \}$ are empirically evaluated using the Monte Carlo method, and resulting estimates are shown in terms of curves in Fig.~\ref{fig:exp_gamma_vs_alpha}~(a).

Therefore,
\begin{multline} \label{eq:region_error_formula_2d}
\E \bigl[ d^2\{\textsl{R}^{xy}[I,J]\} \bigr] = \\
\begin{cases}
\left( q_2^2 + q_1^2 \right) /\, 6, & 0 < q_1 \le \frac{q_2}{2}; \\
\left( q_2^2 + \gamma_1 q_1^2 + 2 \gamma_{12} q_2 q_1 \right) /\, 6, & \frac{q_2}{2} < q_1 \le q_2.
\end{cases}
\end{multline}
And it is not difficult to generalize the above result to the $N$-d scenario as follows:
\begin{multline} \label{eq:region_error_formula_Nd}
\E \bigl[ d^2\{\textsl{R}^{xy}[I_1, \cdots, I_N]\} \bigr] = \\
\begin{cases}
N \left( q_2^2 + q_1^2 \right) /\, 12, & 0 < q_1 \le \frac{q_2}{2}; \\
N \left( q_2^2 + \gamma_1 q_1^2 + 2 \gamma_{12} q_2 q_1 \right) /\, 12, & \frac{q_2}{2} < q_1 \le q_2.
\end{cases}
\end{multline}

\begin{figure*}
\noindent{}\textbf{Note: This is a supplementary file for paper ``Impact Analysis of Baseband Quantizer on Coding Efficiency for HDR Video'', published in \textit{IEEE Signal Processing Letters} in 2016 by Chau-Wai~Wong, Guan-Ming~Su, and Min~Wu.}
\vspace{5mm}
\end{figure*}

\newpage

\enlargethispage{-1.4in}

\section{Summary of the Theoretical Results}
\pdfoutput=1
The description of a theoretical problem in this appendix is abstracted from the practical video coding problem motivated in the main body of the paper where the quantization effect in raw signal domain cannot be ignored. This is particularly true nowadays for the high-dynamic-range (HDR) video signal that takes 10 to 12 bits comparing to 8 bits in the conventional case.

\vspace{4mm}We refer to the quantizer for the raw video signal as the baseband quantizer, $\sQ_1$, and the quantizer in the video codec as the codec quantizer, $\sQ_2$. We are interested in the cases that the strength of $\sQ_1$ is less than that of $\sQ_2$. Block diagram for the setup that we have investigated in the main body of the paper is shown in Fig.~\ref{fig:theo_blkDiagrams}~(b). Fig.~\ref{fig:theo_blkDiagrams}~(a) serves as a stepping stone for obtaining the result for Fig.~\ref{fig:theo_blkDiagrams}~(b).

\vspace{4mm}In Fig.~\ref{fig:theo_blkDiagrams}, random vector $\u \in \R^N$ is the input signal and $\hat{\u} \in \R^N$ is a reconstructed signal. $\sQ_1$ and $\sQ_2$ are high-rate scalar quantizers with quantization steps $q_1<q_2$. $\H$ is a (non-degenerated) orthogonal transform of size $N$-by-$N$. Entropy coding is by configuration carried out after block $\sQ_2$, so the bitrate is solely controlled by $\sQ_2$. In the comparison below, $\sQ_2$ is fixed, hence the MSE and SNR values are all fairly compared at the same bitrate.

\vspace{4mm}\begin{result}[One baseband quantizer and one codec quantizer]
\begin{equation}
\E \left[ \|u-\hat{u}\|^2 \right] = \frac{N}{12} \left( q_2^2 + q_1^2 \right),\hspace{0.5cm} q_1 \le q_2.
\end{equation}
Given the assumption $q_1 < q_2$, we define $q_1 = \tfrac{q_2}{\alpha}, \ \alpha \ge 1$. The SNR loss with reference to an almost perfectly fine baseband quantizer, \ie, $q_1 \to 0$, is

\begin{equation}
\textrm{SNR LOSS} = 
10 \log_{10} \left( 1 + \frac{1}{\alpha^2} \right),\hspace{0.5cm} \alpha \ge 1.
\end{equation}
\end{result}

\vspace{4mm}\begin{result}[Two baseband quantizers and one codec quantizer]
\begin{equation}
\small
\E \left[ \|u-\hat{u}\|^2 \right] = \begin{cases}
\frac{N}{12} \left( q_2^2 + 2 q_1^2 \right), & 0 < q_1 \le \frac{q_2}{2}; \\
\frac{N}{12} \left[ q_2^2 + (1+\gamma_1) q_1^2 + 2 \gamma_{12} q_2 q_1 \right], & \frac{q_2}{2} < q_1 \le q_2.
\end{cases}
\end{equation}

\begin{equation}
\small
\textrm{SNR LOSS} = 
\begin{cases}
10 \log_{10} \left( 1 + \frac{2}{\alpha^2} \right), & \alpha \ge 2; \\
10 \log_{10} \left( 1 + \frac{1+\gamma_1}{\alpha^2} + \frac{2\gamma_{12}}{\alpha} \right), & 1 \le \alpha < 2 .
\end{cases}
\end{equation}
where estimates of $\gamma_1$ and $\gamma_{12}$ are displayed in Fig.~\ref{fig:theoretical_result}~(a). The resulting SNR loss is shown in Fig.~\ref{fig:theoretical_result}~(b). Simulation using AR(1) signal agrees with the theoretical result.
\end{result}

\begin{figure}[t]
 \centering
 \includegraphics[width=3in]{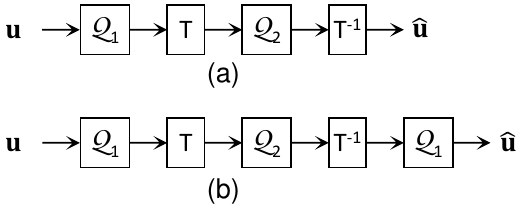}
 \caption{Block diagrams for (a) scenario 1: one baseband quantizer and one codec quantizer, and (b) scenario 2: two baseband quantizers and one codec quantizer.}
 \label{fig:theo_blkDiagrams}
\end{figure}

\begin{figure}[!t]
 \centering
 \subfloat[]{\includegraphics[width=2.2in]{figures/gamma_vs_alpha_eps_largerFont.pdf}}\\
 \subfloat[]{\includegraphics[width=2.2in]{figures/snrloss_vs_alpha_eps_largerFont.pdf}}
 \caption{(a) Estimated $\gamma_1$ and $\gamma_{12}$, and (b) SNR LOSS as a function of $q_2/q_1$ or $\alpha$. Note that for the above curves, we have extended the definitions of $\gamma_1$ and $\gamma_2$ to the case $q_2/q_1 > 2$ in which the true values of $\gamma_1$ and $\gamma_{12}$ are $1$ and $0$, respectively.}
 \label{fig:theoretical_result}
\end{figure}

\end{document}